\newcolumntype{P}[1]{>{\centering\arraybackslash}p{#1}}
\newcommand{\xmark}{\ding{55}}%
\newtheorem{theorem}{Theorem}
\newtheorem{proposition}{Proposition}
\newtheorem{lemma}{Lemma}
\newtheorem{remark}{Remark}
\title{A New Sensing $5'\rightarrow 3'$  Watson-Crick Automata Concept}
\author{Benedek Nagy
\institute{Department of Mathematics, Faculty of Arts and Sciences, \\
      Eastern Mediterranean University, Famagusta, North Cyprus, Mersin-10, Turkey\\}
\email{nbenedek.inf@gmail.com}
\and
Shaghayegh Parchami
\institute{Department of Mathematics, Faculty of Arts and Sciences, \\
      Eastern Mediterranean University, Famagusta, North Cyprus, Mersin-10, Turkey\\}
\email{\quad shaghayegh@gmail.com}
\and
Hamid Mir-Mohammad-Sadeghi
\institute{Department of Mathematics, Faculty of Arts and Sciences, \\
      Eastern Mediterranean University, Famagusta, North Cyprus, Mersin-10, Turkey\\}
\email{\quad ha.sadeghi@gmail.com}
}
\begin{document}
\maketitle

\begin{abstract}
Watson-Crick (WK) finite automata are working on a Watson-Crick tape, that is, on a DNA molecule. Therefore, it has two reading heads. While in traditional WK automata both heads read the whole input in the same physical direction, in $5'\rightarrow 3'$  WK automata the heads start from the two extremes and read the input in opposite direction. In sensing  $5'\rightarrow 3'$  WK automata the process on the input is finished when the heads meet. Since the heads of a WK automaton may read longer strings in a transition, in previous models a so-called sensing parameter took care for the proper meeting of the heads (not allowing to read the same positions of the input in the last step). In this paper, a new model is investigated, which works without the sensing parameter (it is done by an appropriate change of the concept of configuration). Consequently, the accepted language classes of the variants are also changed.  Various hierarchy results are proven in the paper.

%\keywords{Watson-Crick automata, $5'\rightarrow 3'$  WK automata, finite automata, linear context-free languages, hierarchy}
\end{abstract}

\section{Introduction}
\label{s:intro}
DNA computing provides relatively new paradigms of computation \cite{Adleman,Paun} from the end of the last century. In contrast, automata theory is one of the base of computer science. Watson-Crick-automata (abbreviated as WK automata), as a branch of DNA computing was introduced in \cite{Freund}; they relate to both mentioned fields: they have important relation to formal language and automata theory. More details can be found in \cite{Paun} and \cite{Czeizle}. WK automata work on double-stranded tapes called Watson-Crick tapes (i.e., DNA molecules), whose strands are scanned separately by read-only heads. The symbols in the corresponding cells of the double-stranded tapes are related by (the Watson-Crick) complementarity relation. The relationships between the classes of the Watson-Crick automata are investigated in \cite{Freund,Paun,Kuske}. The two strands of a DNA molecule have opposite $5'\rightarrow 3'$  orientation. Considering the reverse and the $5'\rightarrow 3'$  variants, they are more realistic in the sense, that both heads use the same biochemical direction (that is opposite physical directions) \cite{Freund,DNA2008,Leupold}. Some variations of the reverse Watson-Crick automaton with sensing power which tells whether the upper and the lower heads are within a fixed distance (or meet at the same position) are discussed in \cite{DNA2008,Nagy2009,iConcept,Nagy2013}. %\cite{DNAconf,JLC,FUIN-Leupold}.
Since the heads of a WK automaton may read longer strings in a transition, in these models the sensing parameter took care of the proper meeting of the heads by sensing if the heads are close enough to meet in the next transition step.
%However, this parameter could be used to denied accepting some of the strings as well and it affects %the hierarchy of languages accepted by various variants, including all-final and stateless automata.

The motivation of the new model is to erase the rather artificial term of sensing parameter from the model. By the sensing parameter one can `cheat' to allow only special finishing transitions, and thus,  in the old model the all-final variants have the same accepting power as the variants without this condition. Here, the accepted language classes of the new model are analyzed. Variations such as
all-final, simple, 1-limited, and stateless $5'\rightarrow 3'$  Watson-Crick automata are also detailed.

%-------------------------------------------------------------------------------------
\section{Preliminaries, Definitions}
\label{s:pre}
We assume that the reader is familiar with basic concepts of formal languages and automata, otherwise she or he is referred to \cite{Handb}. We denote the empty word by $\lambda$.

The two strands of the DNA molecule have opposite $5'\rightarrow3'$ orientations. For this reason, it is worth to take into account a variant of Watson-Crick finite automata that parse the two strands of the Watson-Crick tape in opposite directions. Figure \ref{sd} indicates the initial configuration of such an automaton.

\begin{figure}[h]
    \centering
        \includegraphics[scale=0.29]{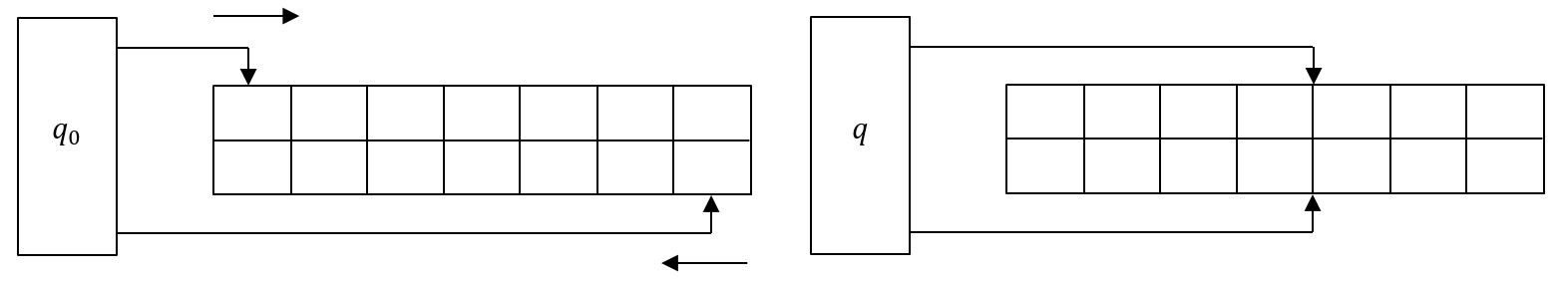}
            \caption{A sensing $5'\rightarrow 3'$ WK automaton in the initial configuration and in an accepting configuration (with a final state $q$).}
    \label{sd}
\end{figure}

The $5'\rightarrow3'$ WK automaton is sensing, if the heads sense that they are meeting. \\
Formally, a Watson-Crick automaton is a 6-tuple $M=(V,\rho,Q,q_0,F,\delta)$, where: 
\begin{itemize}
\item $V$ is the (input) alphabet, 
\item $\rho\subseteq V\times V$ denotes a complementarity relation, 
\item $Q$ represents a finite set of states, 
\item $q_0\in Q$ is the initial state, 
\item $F\subseteq Q$ is the set of final (accepting) states and 
\item $\delta$ is called transition mapping and it is of the form  $\delta: Q \times \left(\begin{array}{c}V^{*}\\ V^{*}\end{array}\right)\rightarrow 2^Q$, such that it is non empty only for finitely many triplets $(q,u,v), q \in Q, u,v\in V^*$.
    \end{itemize}
In sensing $5'\rightarrow3'$ WK automata every pair of positions in the Watson-Crick tape is read by exactly one of the heads in an accepting computation, and therefore the complementarity relation cannot play importance, instead, we assume that it is the identity relation. Thus, it is more convenient to consider the input as a normal word instead the double stranded form. Note here that complementarity can be excluded from the traditional models as well, see \cite{Kuske} for details.\\
Let us define the radius of an automaton by $r$ which shows the maximum length of the substrings of the input that can be read by the automaton in a transition.
A configuration of a Watson-Crick automaton is a pair $(q,w)$ where $q$ is the current state of the automaton and $w$ is the part of the input word which has not been processed (read) yet. For $w',x,y\in V^*,q,q'\in Q$, we write a transition between two configurations as:\\
$(q,xw'y)\Rightarrow(q',w' )$ if and only if $q'\in \delta(q,x,y)$. We denote the reflexive and transitive closure of the relation $\Rightarrow$ by $\Rightarrow^*$. Therefore, for a given $w\in V^*$, an accepting computation is a sequence of transitions $(q_0,w) \Rightarrow^* (q_F,\lambda)$, starting from the initial state and ending in a final state.
The language accepted by a WK automaton $M$ is:\\
$L(M)=\{ w\in V^*  \mid  (q_0,w) \Rightarrow^* (q_F,\lambda),  q_F\in F$\}.
The shortest nonempty word accepted by $M$ is denoted by $w_s$, if it is uniquely determined or any of them if there are more than one such word(s).\\
There are some restricted versions of WK automata which can be defined as follows: 
\begin{itemize}
\item $\textbf{N}$: stateless, i.e., with only one state: if $Q=F=\{q_0\}$;
\item $\textbf{F}$: all-final, i.e., with only final states: if $Q=F$;
\item $\textbf{S}$: simple (at most one head moves in a step) $\delta:(Q\times ((\lambda,V^* )\cup(V^*,\lambda)))\rightarrow 2^Q$.
\item $\textbf{1}$: 1-limited (exactly one letter is being read in each step) $\delta:(Q\times ((\lambda,V)\cup (V,\lambda)))\rightarrow2^Q$.
\end{itemize}
Additional versions can be determined using multiple constrains such as \textbf{F1}, \textbf{N1}, \textbf{FS}, \textbf{NS} WK automata.

Now, as an example, we show the language $L=\{a^nb^m\mid n,m\geq 0 \}$ that can be accepted by an $\textbf{N1}$ sensing $5'\rightarrow3'$ WK automaton (Figure \ref{tm:2a}).
\begin{figure}[h]
%\vspace{-15pt}
    \centering
        \includegraphics[scale=0.15]{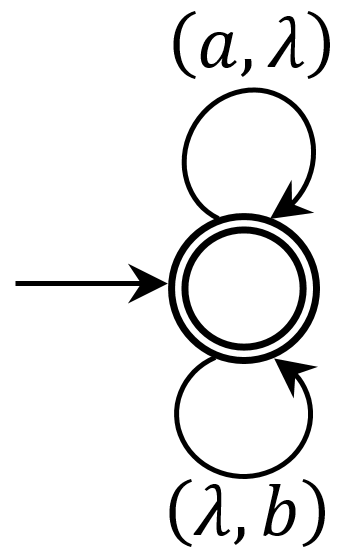}
            \caption{A sensing $5' \rightarrow 3'$ WK automaton of type $\textbf{N1}$ accepting the language $\{a^nb^m\mid n,m\geq 0 \}$.}
    \label{tm:2a}
%\vspace{-10pt}
\end{figure}

%-------------------------------------------------------------------------------------
\section{Hierarchy by sensing $5'\rightarrow 3'$ WK automata}
\label{s:WK}

\begin{theorem}\label{thm:1} The following classes of languages coincide:
\begin{itemize}
\item the class of linear context-free languages defined by linear context-free grammars,
\item the language class accepted by sensing $5'\rightarrow 3'$ WK finite automata,
\item the class of languages accepted by $\textnormal{\textbf{S}}$ sensing $5'\rightarrow 3'$ WK automata,
\item the class of languages accepted by $\textnormal{\textbf{1}}$ sensing $5'\rightarrow 3'$ WK automata.
\end{itemize}
\end{theorem}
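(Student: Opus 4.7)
The plan is to close the cycle of inclusions $\text{LIN}\subseteq\mathbf{1}\subseteq\mathbf{S}\subseteq\text{sensing}\subseteq\text{LIN}$, where the two middle inclusions hold by definition, since every $\mathbf{1}$-limited automaton is simple and every simple automaton is a sensing $5'\rightarrow 3'$ WK automaton. Thus only the first and last inclusions require real work, and together they establish the four-way coincidence claimed in the theorem.

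For sensing $\subseteq$ LIN, I will translate any sensing $5'\rightarrow 3'$ WK automaton $M=(V,\rho,Q,q_0,F,\delta)$ into a linear grammar $G$ with terminal alphabet $V$, nonterminal set $N=\{[q]\mid q\in Q\}$, start symbol $[q_0]$, a production $[q]\to x[q']y$ for every $q'\in\delta(q,x,y)$, and a production $[q_F]\to\lambda$ for every $q_F\in F$. A single rewriting step $[q]\Rightarrow x[q']y$ inside a sentential form $\alpha[q]\beta$ mirrors precisely one transition $(q,xw'y)\Rightarrow(q',w')$ of $M$, with $\alpha$ and $\beta$ recording what the left and right heads have already consumed. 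A routine induction on computation/derivation length then shows $[q_0]\Rightarrow_G^* w$ iff $(q_0,w)\Rightarrow_M^*(q_F,\lambda)$ for some $q_F\in F$, so $L(G)=L(M)$.

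For LIN $\subseteq$ $\mathbf{1}$, I first normalize the given linear grammar so that every production takes one of the shapes $A\to aB$, $A\to Ba$ or $A\to\lambda$, with $a\in V$ and $A,B$ nonterminals. This normalization is standard: any production $A\to xBy$ with $|x|+|y|\ge 2$ is split by introducing fresh nonterminals that peel off one terminal at a time from the left of $x$ or the right of $y$, and any purely terminal production $A\to w$ is handled by letting an auxiliary last nonterminal derive $\lambda$. From the normal form I define a $\mathbf{1}$-limited automaton with states $Q=\{q_A\mid A\in N\}$, initial state $q_S$, final states $F=\{q_A\mid A\to\lambda\in P\}$, and transitions $q_B\in\delta(q_A,a,\lambda)$ for every $A\to aB$ together with $q_B\in\delta(q_A,\lambda,a)$ for every $A\to Ba$. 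Because the grammar is linear, throughout a leftmost derivation of $w$ the unique nonterminal sits between the already-generated prefix and suffix, so the derivation matches step for step the automaton's consumption of $w$ from both ends, and acceptance occurs exactly when the unread middle is $\lambda$.

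The only part that needs a little care is the placement of the $\lambda$-productions during normalization, so that grammar termination aligns with the new model's sensing-parameter-free acceptance criterion $w'=\lambda$ rather than with any explicit sensing predicate. Once this is in order, the correspondence between accepting computations, linear derivations and input factorizations is bijective, and the cycle of inclusions closes, yielding all four classes equal.
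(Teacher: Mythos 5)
Your proposal is correct and follows essentially the same route as the paper: the same cycle of inclusions with the middle ones holding by definition, a normal-form linear grammar simulated by a $\textbf{1}$-limited automaton whose states are the nonterminals, and conversely a linear grammar whose nonterminals are the automaton's states. The only (harmless) difference is that you terminate with $\lambda$-productions at final states, whereas the paper stays $\lambda$-free and instead uses terminal productions $A\to a$ together with one extra final state.
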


\begin{proof} For DNA computing reasons (and for simplicity) we work with $\lambda$-free languages.
The proof is constructive, first we show that the first class is included in the last one. Let $G=(N,T,S,P)$ be a linear context-free grammar having productions only in the forms $A\to aB, A\to Ba, A\to a$ with $A,B\in N,\  a\in T$. Then the $\textbf{1}$ sensing $5'\rightarrow 3'$ WK automaton $M=(T,id,N\cup\{q_f\},S,\{q_f\},\delta)$ is defined with $B\in \delta(A,u,v)$ if $A\to uBv \in P$ and
$q_f \in \delta(A,u,\lambda)$ if $A\to u \in P$ ($u,v\in T\cup\{\lambda\}$). Clearly, each (terminated) derivation in $G$ coincides to a(n accepting) computation of $M$, and vice versa. Thus the first class is included in the last one.

The inclusions between the fourth, third and second classes are obvious by definition. To close the circle, we need to show that the second class is in the first one. Let the sensing $5'\to 3'$ WK automaton $M=(V,id,Q,q_0,F,\delta)$ be given. Let us construct the linear context-free grammar
$G=(Q,V,q_0,P)$ with productions: $p\to u q v $ if $q\in\delta(q,u,v)$
and $p\to u v \in P$ if $q\in\delta(q,u,v)$ and $q\in F$
($p,q\in Q,\  u,v\in V^*$). Again, the (accepting) computations of $M$ are in a bijective correspondence to the (terminated) derivations in $G$. Thus, the proof is finished.
\end{proof}

Based on the previous theorem we may assume that the considered sensing $5'\rightarrow 3'$ WK automata have no $\lambda$-movements, i.e., at least one of the heads is moving in each transition.

\begin{lemma}\label{l1}
	Let $M$ be an $\textnormal{\textbf{F1}}$ sensing $5'\rightarrow 3'$ WK automaton and let the word $w \in V^+$ that is in $L(M)$. Let $\left|w\right|=k$, then for each $l$, where $0\leq l \leq k$, there is at least one word $w_l\in L(M)$ such that $\left|w_l\right|=l$.	
\end{lemma}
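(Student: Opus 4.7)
The plan is to extract, for each $l$, a word of length $l$ from $w$ by keeping only the prefix and suffix of $w$ that the first $l$ steps of a chosen accepting computation of $w$ actually consume, and discarding the untouched middle piece.

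First I would fix an accepting computation of $w$. Because $M$ is 1-limited and $|w|=k$, this computation has exactly $k$ steps,
\[
(q_0,w)\Rightarrow(q_1,w^{(1)})\Rightarrow\cdots\Rightarrow(q_k,\lambda),
\]
where each $w^{(i)}$ is a contiguous factor of $w$ obtained from $w^{(i-1)}$ by chopping off either its leftmost or its rightmost letter. For a fixed $l$ with $0\leq l\leq k$, let $j$ denote the number of left-reads among the first $l$ transitions; then $l-j$ is the number of right-reads. I define $w_l$ to be the length-$j$ prefix of $w$ concatenated with the length-$(l-j)$ suffix of $w$, so that $|w_l|=l$.

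The key step is to show that the same first $l$ transitions are applicable starting from $(q_0,w_l)$. Since each 1-limited transition depends only on the current state and the single letter at the moving head, and since the letter offered to the left head after $i$ earlier left-reads (respectively, to the right head after $i$ earlier right-reads) is simply the $(i+1)$-st letter of the length-$j$ prefix (respectively, of the reversed length-$(l-j)$ suffix) of $w$, and these prefix and suffix are shared with $w_l$, the simulation proceeds identically. After $l$ steps we reach $(q_l,\lambda)$; because $M$ is all-final we have $q_l\in F$, so $w_l\in L(M)$. The boundary cases $l=0$ (take $w_0=\lambda$, which is in $L(M)$ because $q_0\in Q=F$) and $l=k$ (take $w_k=w$) are immediate.

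The only point requiring care is verifying that during the first $l$ transitions on $w_l$ the automaton never asks for a letter that has been discarded, but this is automatic: in those $l$ steps the left-read count never exceeds $j$ and the right-read count never exceeds $l-j$, which is exactly what $w_l$ makes available at each end. So I do not expect any real obstacle beyond this bookkeeping, and the essential content of the argument is that in the $\textnormal{\textbf{F1}}$ setting every intermediate state is accepting and every intermediate head configuration can be viewed as an accepting computation of the shorter word formed by the already-read prefix and suffix.
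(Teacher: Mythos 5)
Your proposal is correct and follows essentially the same route as the paper: take the first $l$ steps of an accepting computation of $w$ and observe that the prefix read by the left head concatenated with the suffix read by the right head is itself accepted because every state is final. Your version merely spells out the bookkeeping (that the same transitions apply to $w_l$ and consume it exactly) that the paper leaves implicit.
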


\begin{proof}
	According to the definition of $\textbf{F1}$ sensing $5'\rightarrow 3'$ WK automaton, $w$ can be accepted in $k$ steps such that in each step, the automaton can read exactly one letter. Moreover, each state is final, therefore by considering the first $l$ steps of the $k$ steps, the word $w_l=w'_lw''_l$ is accepted by $M$, where $w'_l$ is read by the left head and $w''_l$ is read by the right head during these $l$ steps, respectively.
\end{proof}

\begin{remark}\label{R1}
	Since, by definition, every $\textnormal{\textbf{N1}}$ sensing $5'\rightarrow 3'$ WK automaton is $\textnormal{\textbf{F1}}$ WK automaton at the same time, Lemma \ref{l1} applies for all $\textnormal{\textbf{N1}}$ sensing $5' \rightarrow 3'$ WK automata also.
\end{remark}

\begin{theorem}\label{thm:2}
	The class of languages that can be accepted by $\textnormal{\textbf{N1}}$ sensing $5'\rightarrow 3'$ WK automata is properly included in the language class accepted by $\textnormal{\textbf{NS}}$ sensing $5'\rightarrow 3'$ WK automata.
\end{theorem}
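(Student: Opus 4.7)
The plan is to establish the inclusion by observing it is immediate from the definitions, and then to exhibit a concrete separating witness language.

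First, for the inclusion $\textbf{N1}\subseteq\textbf{NS}$: the 1-limited restriction forces exactly one letter to be read in a transition, which in particular means that at most one of the heads moves; thus every 1-limited transition is a simple transition. Since statelessness ($Q=F=\{q_0\}$) is the same constraint in both cases, any $\textbf{N1}$ automaton is already an $\textbf{NS}$ automaton, so $\textbf{N1}\subseteq\textbf{NS}$.

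Next, for the strictness, I would propose the witness language $L=(ab)^{*}$. To see that $L\in\textbf{NS}$, I would construct a stateless simple sensing $5'\to 3'$ WK automaton $M=(\{a,b\},id,\{q_0\},q_0,\{q_0\},\delta)$ with a single transition $q_0\in\delta(q_0,ab,\lambda)$ (only the left head moves, reading the block $ab$). Starting from $(q_0,w)$ the only available move consumes a leading $ab$, so $(q_0,w)\Rightarrow^{*}(q_0,\lambda)$ precisely when $w\in(ab)^{*}$, hence $L(M)=(ab)^{*}$.

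To see that $L\notin\textbf{N1}$, I would argue by contradiction using Lemma \ref{l1} via Remark \ref{R1}. Suppose some $\textbf{N1}$ sensing $5'\to 3'$ WK automaton $M'$ accepted $L$. The word $ab\in L(M')$ has length $2$, so by Remark \ref{R1} there must exist a word $w_{1}\in L(M')$ with $|w_{1}|=1$. But $(ab)^{*}$ contains no word of length $1$, a contradiction. Therefore $L\in\textbf{NS}\setminus\textbf{N1}$, proving properness.

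I do not expect any serious obstacle here: the inclusion is syntactic, the $\textbf{NS}$ construction is minimal, and Lemma \ref{l1} hands us the non-membership for free. The only place to be careful is verifying that the proposed single-transition stateless automaton truly accepts \emph{only} words of the form $(ab)^{n}$ and not, e.g., prefixes ending mid-block; this is immediate since accepting computations must reach $(q_0,\lambda)$ and the only transition strips a full $ab$ from the left.
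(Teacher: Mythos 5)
Your proposal is correct and follows essentially the same route as the paper: the inclusion is noted as immediate from the definitions, and properness is obtained by exhibiting an explicit stateless simple automaton for a witness language whose shortest nonempty word has length $2$, then invoking Lemma~\ref{l1} (via Remark~\ref{R1}) to rule out acceptance by any $\textbf{N1}$ automaton. The only difference is the choice of witness ($ (ab)^{*}$ instead of the paper's $\{a^{3n}b^{2m}\mid n,m\geq 0\}$), which does not change the argument.
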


\begin{proof}
	Obviously, these automata have exactly one state. In $\textbf{NS}$ machines, the reading head may read some letters in a transition, while the input should be read letter by letter by $\textbf{N1}$ machines. The language $L=\{a^{3n} b^{2m}\mid n,m\geq 0\}$ proves the proper inclusion. In this language $w_s$ is $bb$ and in an $\textbf{NS}$ automaton it can be accepted by any of the following transitions: $(bb,\lambda)$, $(\lambda,bb)$. Although by Lemma \ref{l1}, $w_s$ cannot be the shortest nonempty accepted word in a language accepted by an $\textbf{N1}$ sensing $5'\rightarrow 3'$ WK automaton. Figure \ref{tm:2} shows that language $L$ can be accepted by an $\textbf{NS}$ sensing $5'\rightarrow 3'$ WK automaton. Therefore, the proper inclusion stated in the theorem is proven.
\end{proof}

\begin{figure}[h]
%\vspace{-15pt}
    \centering
        \includegraphics[scale=0.15]{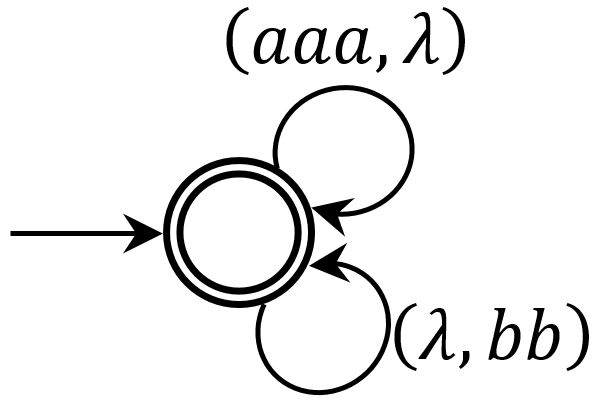}
            \caption{A sensing $5' \rightarrow 3'$ WK automaton of type $\textbf{NS}$ accepting the language $\{a^{3n} b^{2m}\mid n,m\geq 0\}$.}
    \label{tm:2}
%\vspace{-10pt}
\end{figure}

\begin{theorem}\label{thm:3}
	The class of languages that can be accepted by $\textnormal{\textbf{NS}}$ sensing $5'\rightarrow 3'$ WK automata is properly included in the language class accepted by $\textnormal{\textbf{N}}$ sensing $5'\rightarrow 3'$ WK automata.
\end{theorem}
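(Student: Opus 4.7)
The inclusion $\textnormal{\textbf{NS}}\subseteq\textnormal{\textbf{N}}$ is immediate from the definitions. For the strict separation my plan is to exhibit $L=\{a^nb^n\mid n\geq 0\}$ as a witness language. Membership $L\in\textnormal{\textbf{N}}$ is straightforward: I would use the one-state automaton with the single transition $q_0\in\delta(q_0,a,b)$. Since the initial state is also final, $\lambda$ is accepted; and $a^nb^n$ for $n\geq 1$ is accepted by applying $(a,b)$ exactly $n$ times.

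The bulk of the argument is to prove $L\notin\textnormal{\textbf{NS}}$. Given an arbitrary \textbf{NS} automaton $M$ with single state $q_0$, I would introduce the finite sets $U=\{u\mid q_0\in\delta(q_0,u,\lambda)\}$ and $V'=\{v\mid q_0\in\delta(q_0,\lambda,v)\}$; by the convention adopted after Theorem~\ref{thm:1} that no transition is a pure $\lambda$-movement, both are subsets of $V^+$. A simple transition either chops a prefix from $U$ off the left end of the current configuration or a suffix from $V'$ off the right end, and since the two heads always operate on disjoint portions of the unread input, left- and right-transitions can be reordered freely without affecting what is consumed. The key structural claim I would establish is that $L(M)=U^{*}V'^{*}$: any accepting computation factors its input as $w=p\cdot s$ with $p\in U^{*}$ the concatenation of left-reads in order of application and $s\in V'^{*}$ the concatenation of right-reads in reverse order, and conversely every such factorization yields a valid computation.

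From this characterization I would derive a contradiction with $L=U^{*}V'^{*}$. Every $u\in U$ lies in $L$, so $u=a^kb^k$ for some $k\geq 1$; but then $uu=a^kb^ka^kb^k$ also lies in $U^{*}\subseteq L$, which is not of the form $a^nb^n$. Hence $U=\emptyset$, and symmetrically $V'=\emptyset$, giving $L(M)\subseteq\{\lambda\}$ and contradicting $ab\in L$. The only mildly nontrivial step is the structural characterization $L(M)=U^{*}V'^{*}$, whose justification rests exactly on the commutativity of left- and right-simple transitions in the stateless setting; everything else is a short combinatorial check.
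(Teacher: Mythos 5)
Your proof is correct, but it proceeds differently from the paper. The paper separates the classes with the language $\{a^{2n+m}b^{2m+n}\mid n,m\geq 0\}$ and a shortest-word argument in the spirit of Lemma~\ref{l1}: in a stateless automaton every factor read in a single transition is itself an accepted word, so the shortest nonempty word $aab$ (or $abb$) would have to be consumed by a single simple loop transition, and iterating that transition produces words such as $(aab)^2$ outside the language. You instead prove a complete structural characterization: for a stateless simple automaton with left-transition labels $U$ and right-transition labels $V'$, the accepted language is exactly $U^{*}V'^{*}$ (your induction on the computation, or equivalently the observation that left- and right-reads commute, does establish this), and then you rule out $\{a^nb^n\mid n\geq 0\}$ by noting that any nonempty $u\in U$ or $v\in V'$ would be of the form $a^kb^k$ with $k\geq 1$, whence $uu\in U^{*}$ (resp. $vv\in V'^{*}$) lies outside the language; membership of $\{a^nb^n\}$ in the stateless class via the single transition $(a,b)$ is immediate. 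Both arguments are sound; yours buys a reusable description of all \textbf{NS}-stateless languages as $U^{*}V'^{*}$ (which instantly excludes many languages, including the paper's own witness), at the cost of proving that small lemma, while the paper's argument is shorter and leans only on the shortest-accepted-word observation already recorded in Lemma~\ref{l1} and Remark~\ref{R1}. One small point of care: the blanket assumption of no $\lambda$-moves was justified after Theorem~\ref{thm:1} for the unrestricted model, so when you invoke it for the restricted \textbf{NS} class you should add the (easy) remark that in a stateless automaton a $\lambda$-move does not change the configuration and can be deleted without affecting the language or the type of the automaton; with that noted, your argument is complete.
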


\begin{proof}
The language $L=\{a^{(2n+m)} b^{(2m+n)}\mid n,m\geq0\}$ proves the proper inclusion. Suppose that there is an $\textbf{NS}$ sensing $5'\rightarrow 3'$ WK automaton that accepts $L$. The $\textbf{NS}$ sensing $5'\rightarrow 3'$ WK automaton has exactly one state and  one of the heads can move at a time. The $w_s$ of $L$ is $aab$ (or $abb$). It can be accepted by one of the following loop transitions: $(aab,\lambda)$, $(\lambda,aab)$, $(abb,\lambda)$ or $(\lambda,abb)$ by an $\textbf{NS}$ sensing $5'\rightarrow 3'$ WK automaton. Each of the mentioned transitions can lead to accept different language from the language $\{a^{(2n+m)} b^{(2m+n)}\mid n,m \geq 0\}$. For instance, using several times the transition $(aab,\lambda)$, the language $\{(aab)^n\mid n\geq0\}$ is accepted which is not a subset of the language $L$.
Therefore, the language $\{a^{(2n+m)} b^{(2m+n)}\mid n,m\geq0\}$ cannot be accepted by $\textbf{NS}$ sensing $5'\rightarrow 3'$ WK automata. Figure \ref{tm:3} shows that this language can be accepted by an $\textbf{N}$ sensing $5'\rightarrow 3'$ WK automaton. Hence, the theorem holds.
\end{proof}

\begin{figure}[h]
    \centering
        \includegraphics[scale=0.15]{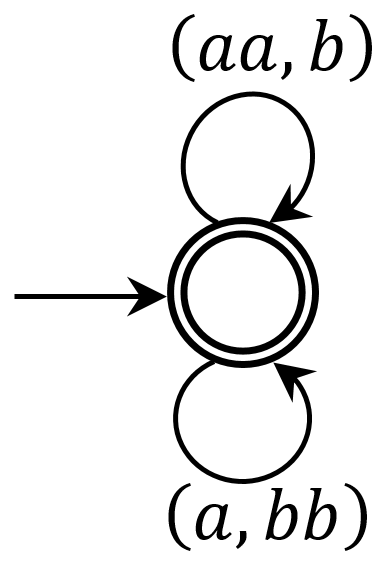}
            \caption{An $\textbf{N}$ sensing $5' \rightarrow 3'$ WK automaton of type $\textbf{N}$ accepting the language $\{a^{2n+m} b^{2m+n}\mid n,m\geq 0\}$.}
    \label{tm:3}
\end{figure}

Now the concept of sensing $5' \rightarrow 3'$ WK automata with sensing parameter is recalled \cite{Nagy2009,Nagy2013}. Formally, a 6-tuple $M=(V,\rho,Q,q_0,F,\delta')$ is a sensing $5' \rightarrow 3'$ WK automaton with sensing parameter,
where, $V$, $\rho$, $Q$, $q_0$ and $F$ are the same as in our model and $\delta'$ is the transition mapping
 defined by the sensing condition in the following way: \\
$\delta': \left(Q \times \left(\begin{array}{c}V^{*}\\ V^{*}\end{array}\right) \times D\right)\rightarrow 2^Q$, where the sensing distance set is indicated by $D=\{0,1,\dots,r,+\infty\}$ where $r$ is the radius of the automaton.
In $\delta'$, the distance between the two heads is used from the set $D$ if it is between $0$ and $r$, and $+\infty$ is used, when the distance of the two heads is more than $r$. In this way,  the set $D$ is an efficient tool and it controls the appropriate meeting of the heads: 
When the heads are close to each other only special transitions are allowed.
\\

The next three theorems highlight the difference between the new model and the model with sensing parameter.
\begin{theorem}\label{thm:4}
	The class of languages that can be accepted by $\textnormal{\textbf{F1}}$ sensing $5' \rightarrow 3'$ WK automata is properly included in the language class of $\textnormal{\textbf{FS}}$ sensing $5' \rightarrow 3'$ WK automata.
\end{theorem}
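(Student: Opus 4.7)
The inclusion $\textbf{F1}\subseteq\textbf{FS}$ is immediate from the definitions, since any 1-limited transition (reading exactly one letter on one of the two heads) is a special case of a simple transition. So the whole content of the statement sits in the strictness, and the natural tool for strictness is Lemma \ref{l1}: every language in the $\textbf{F1}$ class (and hence also in the $\textbf{N1}$ class, by Remark \ref{R1}) has the property that its set of word lengths is downward-closed, i.e.\ an interval $\{0,1,\dots,k\}$ for some $k\in\mathbb{N}\cup\{\infty\}$.

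The plan is therefore to exhibit a witness language whose length spectrum has a gap but which is still simple enough to be recognised by an $\textbf{FS}$ automaton. I would take
\[
L=\{a^{2n}\mid n\geq 0\},
\]
whose length set is $\{0,2,4,\dots\}$ and so skips length $1$.

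To show $L$ is accepted by some $\textbf{FS}$ sensing $5'\rightarrow 3'$ WK automaton, I would construct a one-state machine with state set $Q=F=\{q_0\}$ and a single loop transition that reads the block $aa$ on (say) the left head, that is $q_0\in\delta(q_0,aa,\lambda)$. The unique state is final, so $\lambda$ is accepted; any $a^{2n}$ is accepted by iterating this transition $n$ times; and no odd-length word is accepted because each transition changes the remaining length by an even amount. Since only one head ever moves, the machine is simple, and since its only state is final, it is all-final, so it is $\textbf{FS}$ (in fact $\textbf{NS}$).

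For the non-inclusion direction I would argue by contradiction using Lemma \ref{l1}: if some $\textbf{F1}$ sensing $5'\rightarrow 3'$ WK automaton accepted $L$, then since $aa\in L$ has length $k=2$, the lemma applied with $l=1$ would give a word of length $1$ in $L$. The only candidate is $a\notin L$, a contradiction. I do not expect any serious obstacle: the $\textbf{FS}$ construction is a one-liner, and Lemma \ref{l1} does the separating work for free. The only point that requires any thought is choosing a witness language that simultaneously (i) is recognisable by an $\textbf{FS}$ machine and (ii) violates the downward-closedness of length spectra, and the choice $L=\{a^{2n}\mid n\geq 0\}$ handles both at once.
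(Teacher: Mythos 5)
Your proposal is correct and follows essentially the same route as the paper: the inclusion is immediate from the definitions, and strictness is obtained by applying Lemma \ref{l1} to a witness language whose shortest nonempty word has length $2$, together with an explicit simple all-final automaton accepting it. The only difference is the witness: you use $\{a^{2n}\mid n\geq 0\}$, while the paper uses $\{(aa)^n(bb)^m\mid m\leq n\leq m+1,\ m\geq 0\}$ (a choice that additionally fails to be acceptable by $\textbf{NS}$ and $\textbf{N}$ automata, which the paper exploits in Table \ref{table:1}, whereas your language is even $\textbf{NS}$-acceptable); for this particular theorem either witness works.
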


\begin{proof}
	Obviously, all states of these automata are final and $\textbf{F1}$ sensing $5' \rightarrow 3'$ WK automata should read the input letter by letter, while $\textbf{FS}$ sensing $5' \rightarrow 3'$ WK automata may read some letters in a transition. To show proper inclusion, consider the language $L=\{(aa)^n (bb)^m\mid  m \leq n\leq m+1,m\geq 0\}$. The word $w_s$ can be $aa$ and by Lemma \ref{l1}, $w_s$ cannot be the shortest nonempty accepted word for an $\textbf{F1}$ sensing $5' \rightarrow 3'$ WK automaton. However, $L$ can be accepted by an $\textbf{FS}$ sensing $5' \rightarrow 3'$ WK automaton as it is shown in Figure \ref{tm:4}. The theorem is proven.
\end{proof}
\begin{figure}[h]
    \centering
        \includegraphics[scale=0.15]{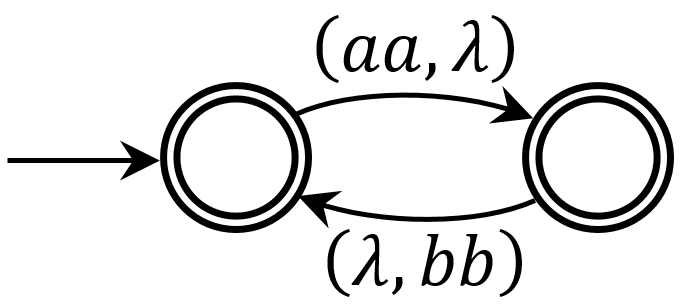}
            \caption{A sensing $5' \rightarrow 3'$ WK automaton of type $\textbf{FS}$ accepting the language $\{(aa)^n (bb)^m\mid  m \leq n\leq m+1,m\geq 0\}$.}
    \label{tm:4}
\end{figure}

\begin{theorem}\label{thm:5}
	The language class accepted by $\textnormal{\textbf{FS}}$ sensing $5' \rightarrow 3'$ WK automata is properly included in the language class of $\textnormal{\textbf{F}}$ sensing $5' \rightarrow 3'$ WK automata.
\end{theorem}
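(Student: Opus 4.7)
The plan is to establish the strict inclusion by exhibiting a language accepted by an $\textbf{F}$ sensing $5'\rightarrow 3'$ WK automaton but not by any $\textbf{FS}$ one. My candidate witness is $L=\{a^n b^n\mid n\geq 0\}$. For the easy direction, I will give a single-state $\textbf{F}$ automaton with the loop transition $q_0\in\delta(q_0,a,b)$: each step consumes one $a$ from the left head and one $b$ from the right head, so every $a^n b^n$ is accepted, and any other word leaves the computation stuck because the required prefix-$a$ and suffix-$b$ decomposition fails.

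For the negative side, I will first extend the key idea behind Lemma~\ref{l1} to $\textbf{FS}$ all-final automata: whenever an accepting computation has, at some intermediate step, read the prefix $X$ with its left head and the suffix $Y$ with its right head, the concatenation $XY$ must itself be accepted, because the initial segment of the transition sequence is an accepting computation of $XY$ and the reached state is final. Applied to $w=a^n b^n$, a short case analysis on $(\alpha,\beta)=(|X|,|Y|)$ with $\alpha+\beta<2n$ shows that $XY\in L$ forces $\alpha=\beta\leq n$. Since an $\textbf{FS}$ transition moves exactly one head, after the very first step one of $\alpha,\beta$ is zero and the other is positive, so $XY\notin L$ unless that first step already consumed the whole input. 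Accepting $a^n b^n$ therefore requires a single transition reading a word of length $2n$, but transitions are bounded by the fixed radius $r$ of the automaton, so $n\leq r/2$ and only finitely many $a^n b^n$ could be accepted --- contradicting $L\subseteq L(M)$.

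I expect the main obstacle to be formalising the extension of Lemma~\ref{l1} cleanly: one must argue that an intermediate configuration $(q_l,w')$ of an accepting computation of $w=Xw'Y$ yields an accepting computation of $XY$ with the same first $l$ transitions, which works because the relation $(q,xw'y)\Rightarrow(q',w')$ depends only on the strings $x,y$ read at the heads and not on the unread middle $w'$. Once this observation is in place, the case analysis on $(\alpha,\beta)$ and the radius-based finiteness argument are routine, and the theorem follows.
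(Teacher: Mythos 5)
Your proof is correct, and its overall strategy is the same as the paper's: exhibit a witness language, accept it with an all-final automaton in which both heads move simultaneously, and rule out any $\textbf{FS}$ acceptor by observing that, since all states are final and only one head moves per step, already after the first (non-$\lambda$) transition a nonempty prefix or suffix of length at most the radius $r$ would have to belong to the language. The difference is the witness: you use $\{a^nb^n\mid n\geq 0\}$, accepted by the single-state loop $q_0\in\delta(q_0,a,b)$, whereas the paper uses $\{a^{2n+q}c^{4m}b^{2q+n}\mid n,q\geq 0,\ m\in\{0,1\}\}$ (Figure \ref{tm:5}), whose short prefixes/suffixes lie in $a^+$ or $b^+$ and are likewise not in the language. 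Your choice is simpler and perfectly adequate for Theorem \ref{thm:5} (note that your witness is even stateless, hence it also lies in the $\textbf{N}$ class, which the paper's witness deliberately avoids so that it can double as a separating example elsewhere in Table \ref{table:1}); you also make explicit, as an extension of Lemma \ref{l1}, the key closure property that in an all-final automaton every intermediate configuration yields acceptance of the concatenation of the parts read so far, which the paper uses only implicitly here and in Theorem \ref{thm:6}. The only point to tighten is the phrase that an $\textbf{FS}$ transition moves \emph{exactly} one head: by definition it moves \emph{at most} one, so either invoke the paper's standing assumption (after Theorem \ref{thm:1}) that there are no $\lambda$-movements, or argue with the first transition that actually reads a nonempty string; with that trivial adjustment your argument goes through verbatim.
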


\begin{proof}
The language $L=\{a^{2n+q} c^{4m}b^{2q+n}\mid n,q\geq 0,m\in \{0,1\} \}$ proves the proper inclusion. Let us assume, contrary that $L$ is accepted by an $\textbf{FS}$ sensing $5' \rightarrow 3'$ WK automaton.
Let the radius of this automaton be $r$. Let $w=a^{2n+q}b^{2q+n}\in L$ with  $ n,q\geq r$ such that $|w|=3n+3q>r$. Then the word $w$ cannot be accepted by using only one of the transitions (from the initial state $q_0$), i.e., $\delta(q_0,a^{2n+q}b^{2q+n},\lambda)$ or $\delta(q_0,\lambda,a^{2n+q}b^{2q+n})$ is not possible.
Therefore, by considering the position of the heads after using any of the transitions from the initial state $q_0$ in $\textbf{FS}$ sensing $5' \rightarrow 3'$ WK automaton (all states are final and one of the heads can move),
 it is clear that either a prefix or a suffix of $w$ with length at most $r$ is accepted by the automaton.
But neither a word from $a^+$, nor from $b^+$ is in $L$.
This fact contradicts to our assumption, % cannot accept the word $w$.
hence $L$ cannot be accepted by any $\textbf{FS}$ sensing $5' \rightarrow 3'$ WK automata. However, it can be accepted by $\textbf{F}$ $5' \rightarrow 3'$ WK automata, since the two heads can move at the same time and they can read both blocks of $a$'s and $b$'s simultaneously. In Figure \ref{tm:5}, an all-final $5' \rightarrow 3'$ WK automaton can be seen which accepts $L$.
\end{proof}
\begin{figure}[h]
    \centering
        \includegraphics[scale=0.15]{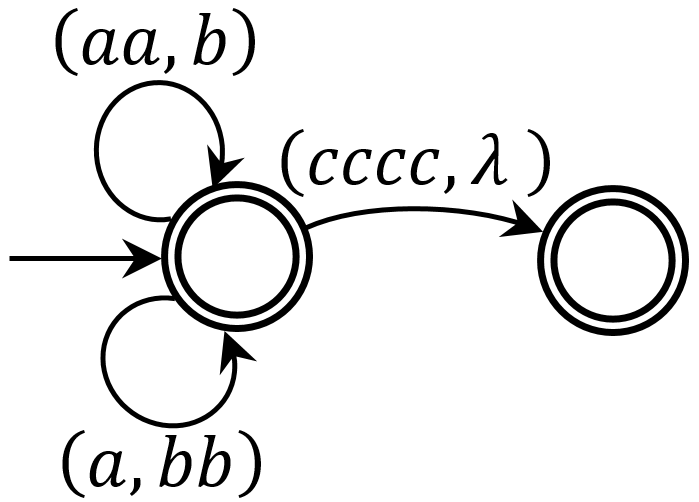}
            \caption{A sensing $5' \rightarrow 3'$ WK automaton of type $\textbf{F}$ accepting the language $\{a^{2n+q} c^{4m}b^{2q+n}\mid n,q\geq 0,  m\in \{0,1\} \}$.}
    \label{tm:5}
\end{figure}

The following result also shows that the new model differs from the one
that is using the sensing parameter in its transitions.
\begin{theorem}\label{thm:6}
	The language class accepted by $\textnormal{\textbf{F}}$ sensing $5' \rightarrow 3'$ WK automata is properly included in the language class of sensing $5' \rightarrow 3'$ WK automata.
\end{theorem}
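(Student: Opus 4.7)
My plan is to dispose of the inclusion by definition and to establish properness by a short structural property of \textbf{F} automata together with an explicit witness language.

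The inclusion is automatic: every \textbf{F} sensing $5'\rightarrow 3'$ WK automaton satisfies the general definition, so the real content is properness. The key observation I would prove first is that every \textbf{F} sensing $5'\rightarrow 3'$ WK automaton $M$ accepts the empty word. Indeed, since $Q = F$, the initial state $q_0$ is final; the relation $\Rightarrow^*$ is reflexive by definition, so $(q_0,\lambda) \Rightarrow^* (q_0,\lambda)$ immediately places $\lambda$ in $L(M)$. Hence every language accepted by an \textbf{F} sensing $5'\rightarrow 3'$ WK automaton contains $\lambda$.

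For the witness I would use any $\lambda$-free language in the larger class; the simplest choice is $L = \{a^n b^n \mid n \geq 1\}$, which is linear context-free and hence, by Theorem~\ref{thm:1}, accepted by some sensing $5'\rightarrow 3'$ WK automaton (an explicit two-state $M$ with $\delta(q_0,a,b) = \{q_0,q_1\}$ and $F = \{q_1\}$ does the job). Since $\lambda \notin L$, the observation above forbids $L$ from being \textbf{F}-acceptable, giving the proper inclusion.

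The only real decision in the plan is whether this $\lambda$-based separation is considered satisfactory, and this is the only place where I anticipate any friction. If one prefers a separation insensitive to the membership of $\lambda$, I would instead pick $L' = \{a^n \# b^n \mid n \geq 0\}$ and exploit the stronger structural fact that in an \textbf{F} automaton every transition $\delta(q_0,x,y) \neq \emptyset$ forces $xy \in L(M)$ (via the one-step computation $(q_0,xy) \Rightarrow (q',\lambda)$ with $q'$ automatically final). A short case analysis on the allowed prefix/suffix pairs then shows that no single non-trivial first transition from $q_0$ on input $a^n \# b^n$ leaves a nonempty middle while still having $xy \in L' \cup \{\lambda\}$; hence acceptance of all of $L'$ would require infinitely many distinct first transitions, contradicting the finiteness of $\delta$.
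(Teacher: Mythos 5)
Your primary argument rests entirely on the empty word: since $Q=F$ and $\Rightarrow^*$ is reflexive, every \textbf{F} automaton accepts $\lambda$, so any $\lambda$-free language in the larger class separates the two. Under the paper's conventions this is precisely the kind of degenerate separation that does not count: the paper declares that it works with $\lambda$-free languages (and thus compares classes disregarding $\lambda$), and if your reasoning were admitted it would equally show that no $\lambda$-free language is accepted by \emph{any} all-final or stateless automaton, trivializing Theorems~\ref{thm:4} and~\ref{thm:5} and the whole all-final/stateless part of the hierarchy --- clearly not the intended reading, which is also why the paper's own proof of this theorem does not argue via $\lambda$. Moreover your witness collapses modulo $\lambda$: $\{a^nb^n\mid n\ge 0\}$ is accepted even by a stateless automaton with the single loop $(a,b)$, so the only thing your primary argument separates is membership of the empty word itself. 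You should therefore not rest the theorem on that route.

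Your fallback, on the other hand, is essentially the paper's proof and is correct; it should be promoted from a contingency to the actual argument and carried out explicitly. The structural fact you state --- in an all-final automaton, if $q'\in\delta(q_0,x,y)\neq\emptyset$ then $xy\in L(M)$, because $(q_0,xy)\Rightarrow(q',\lambda)$ and $q'$ is final --- is exactly the lever the paper uses, and your witness $L'=\{a^n\#b^n\mid n\ge 0\}$ plays the same role as the paper's $\{a^ncb^nc\mid n\ge 1\}$. The case analysis is short: take $n$ greater than the radius $r$ and consider the first transition of an accepting computation of $a^n\#b^n$ (non-$\lambda$ by the remark after Theorem~\ref{thm:1}, which licenses your use of that assumption); it reads a prefix $x$ and a suffix $y$ with $|x|,|y|\le r<n+1$, so neither piece can reach the unique marker $\#$, hence $xy\in a^*b^*$ cannot lie in $L'$, contradicting the structural fact --- equivalently, in your phrasing, any first transition with $xy\in L'$ must consume the whole word, which the finiteness of $\delta$ forbids for all $n$. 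Since $L'$ is linear ($S\to aSb\mid\#$), Theorem~\ref{thm:1} puts it in the unrestricted class, completing the proper inclusion. This matches the paper's method in substance, with a marginally simpler witness and a cleaner contradiction than the head-position argument the paper gives for $\{a^ncb^nc\mid n\ge 1\}$.
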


\begin{proof}
	The language $L=\{a^n cb^n c\mid n\geq 1\}$ can be accepted by a sensing $5' \rightarrow 3'$ WK automaton (without restrictions) (see Figure \ref{tm:6}).
Now we show that there is no $\textbf{F}$ sensing $5'\rightarrow3'$ WK automaton which accepts $L$. Assume the contrary that the language $L$ is accepted by an $\textbf{F}$ sensing $5' \rightarrow 3'$ WK automaton. Let the radius of the automaton be $r$.
Let $w=a^mcb^mc \in L$ with  $m\geq r$. % such that $|w|=2n+2>r$.
 Thus the word $w$ cannot be accepted by applying exactly one transition from the initial state $q_0$.
Now, suppose that there exists $q\in \delta(q_0,w_1,w_2)$ such that $w$ can be accepted by using transition(s) from $q$. Since in $\textbf{F}$ sensing $5' \rightarrow 3'$ WK automaton all states are final, then the concatenation of $w_1$ and $w_2$ is accepted, thus, it must be in $L$ (i.e. $w_1w_2\in L$). Therefore $w_1w_2=a^{m'}cb^{m'}c$ where $2m'+2\leq r\leq m$. To expand both blocks $a^+$ and $b^+$ to continue the accepting path of $w$, the left head must be before/in/right after the subword $a^{m'}$, and the right head must be right before/in/right after the subword $b^{m'}$. However, this is contradicting the fact that the two heads together already read $a^{m'}cb^{m'}c$.
Hence, it is not possible to accept $w$ by an $\textbf{F}$ sensing $5' \rightarrow 3'$ WK automaton and the language $L$ cannot be accepted by an $\textbf{F}$ sensing $5' \rightarrow 3'$ WK automaton.
\end{proof}
\begin{figure}[h]
    \centering
        \includegraphics[scale=0.15]{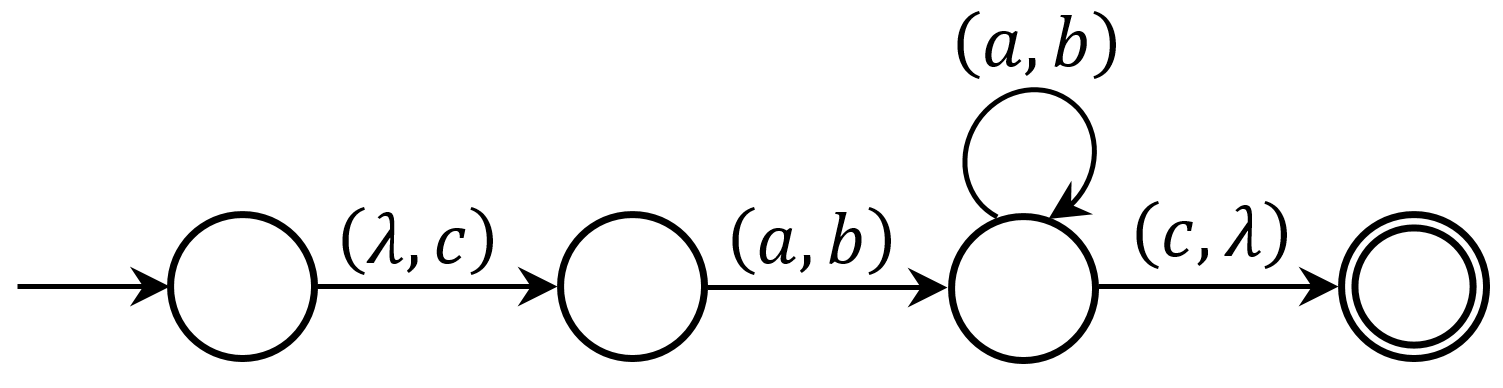}
            \caption{A sensing $5' \rightarrow 3'$ WK automaton accepts the language $\{a^n cb^n c\mid n\geq 1\}$.}
    \label{tm:6}
\end{figure}
\begin{proposition} \label{c3}
The language $L=\{a^n b^m\mid n=m$ or $n=m+1\}$ can be accepted by $\textnormal{\textbf{F1}}$ sensing  $5' \rightarrow 3'$ WK automata, but cannot be accepted by $\textnormal{\textbf{N1}}$, $\textnormal{\textbf{NS}}$ and $\textnormal{\textbf{N}}$ sensing $5' \rightarrow 3'$ WK automata.
\end{proposition}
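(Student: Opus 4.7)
The plan is to establish the two halves of the proposition separately: build a concrete $\textbf{F1}$ automaton for $L$, and then refute acceptance by $\textbf{N1}$, $\textbf{NS}$, $\textbf{N}$ via a single short-word argument centred on $w_s$.

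For the constructive part, I would take $M=(\{a,b\},id,\{q_0,q_1\},q_0,\{q_0,q_1\},\delta)$ with exactly two transitions: $\delta(q_0,a,\lambda)=\{q_1\}$ and $\delta(q_1,\lambda,b)=\{q_0\}$. This is an $\textbf{F1}$ automaton, since both states are final and each transition moves exactly one head by one letter. Because from $q_0$ only the left head can fire (reading an $a$) and from $q_1$ only the right head can fire (reading a $b$), any computation alternates left- and right-reads beginning with a left-read. A straightforward induction on the number of steps shows that after $2k$ steps the consumed prefix/suffix is $a^k/b^k$ and the state is $q_0$, while after $2k{+}1$ steps the consumed prefix/suffix is $a^{k+1}/b^k$ and the state is $q_1$. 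Since every state is final, the automaton accepts exactly when the unread word reaches $\lambda$, which happens iff the input is $a^kb^k$ or $a^{k+1}b^k$; hence $L(M)=L$.

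For the three negative claims, the key observation is that the shortest nonempty word of $L$ is $w_s=a$ (coming from $n=1,m=0$) and that $aa\notin L$ (since $a^2b^0\in L$ would require $2=0$ or $2=0{+}1$). Suppose, for contradiction, that some stateless sensing $5'\rightarrow 3'$ WK automaton $M'=(\{a,b\},id,\{q_0\},q_0,\{q_0\},\delta')$ accepts $L$. Then accepting $a$ requires $(q_0,a)\Rightarrow^*(q_0,\lambda)$; by the transition rule every intermediate unread word has the form $x'\cdot y'$ for a prefix $x'$ and a suffix $y'$ of $a$ not yet consumed, so it lies in $\{a,\lambda\}$. Consequently the computation contains exactly one progress step, which goes directly from $(q_0,a)$ to $(q_0,\lambda)$ using some transition $(x,y)$ with $xy=a$. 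Therefore one of the loop transitions $(a,\lambda)$ or $(\lambda,a)$ is present at $q_0$, and firing it twice yields $(q_0,aa)\Rightarrow(q_0,a)\Rightarrow(q_0,\lambda)$, so $aa\in L(M')$, contradicting $aa\notin L$.

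The main obstacle I anticipate is keeping the last paragraph airtight for the $\textbf{N}$ variant, where transitions may read arbitrarily long blocks: the argument avoids this by working only with the very short word $w_s=a$, into which no multi-letter transition can fit, so a single-letter self-loop at $q_0$ is forced regardless of what other transitions $M'$ may have. Beyond this, the only routine check is the completeness of the $\textbf{F1}$ construction—namely that no word outside $a^*b^*$ or with mismatched counts slips through—which follows at once from the fact that $q_0$ forbids $b$-reads and $q_1$ forbids $a$-reads, so the computation gets stuck as soon as the wrong letter is exposed at the relevant end of the unread word.
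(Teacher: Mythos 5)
Your proof is correct and follows essentially the same route as the paper: an explicit two-state all-final automaton alternating single $a$- and $b$-reads for the positive part, and for the negative part the observation that the shortest nonempty word $a$ forces a one-letter loop $(a,\lambda)$ or $(\lambda,a)$ at the unique state, which then wrongly accepts $aa$ (the paper phrases this as accepting $a^n$, $n\geq 2$). Your write-up just makes explicit what the paper leaves to a figure and a brief remark, including the harmless treatment of possible $\lambda$-moves.
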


\begin{proof}
As it is shown in Figure \ref{c:3}, $L$ can be accepted by an $\textbf{F1}$ sensing  $5' \rightarrow 3'$ WK automaton.
Suppose that $L$ can be accepted by an $\textbf{N}$ sensing  $5' \rightarrow 3'$ WK automaton. The $w_s$ of $L$ is $a$, therefore at least one of the loop-transitions $(a,\lambda)$ and $(\lambda,a)$ is possible from the only state. 
%Let us assume that the transition $(a,\lambda)$ is used by an $\textbf{N}$ sensing  $5' \rightarrow 3'$ WK automaton.
 Since this automaton has only one state, using any of these transitions leads to accept $a^n$ for any $n\geq2$ which are not in $L$. Thus this language cannot be accepted by an $\textbf{N}$, $\textbf{N1}$, $\textbf{NS}$ sensing $5' \rightarrow 3'$ WK automaton.
\end{proof}

\begin{figure}[h]
    \centering
        \includegraphics[scale=0.15]{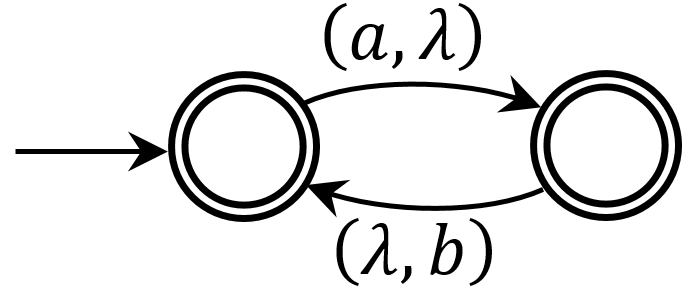}
            \caption{An $\textbf{F1}$ sensing $5' \rightarrow 3'$ WK automaton accepts the language $L=\{a^n b^m\mid n=m$ or $n=m+1\}$.}
    \label{c:3}
\end{figure}

\begin{remark}\label{R2}The following statements follow from Proposition \ref{c3}:
	\begin{enumerate}[label=(\alph*)]
		\item The class of languages that can be accepted by $\textnormal{\textbf{N1}}$ sensing $5'\rightarrow 3'$ WK automata is properly included in the language class accepted by $\textnormal{\textbf{F1}}$ sensing $5'\rightarrow 3'$ WK automata.
		\item The class of languages that can be accepted by $\textnormal{\textbf{NS}}$ sensing $5'\rightarrow 3'$ WK automata is properly included in the language class accepted by $\textnormal{\textbf{FS}}$ sensing $5'\rightarrow 3'$ WK automata.
		\item The class of languages that can be accepted by $\textnormal{\textbf{N}}$ sensing $5'\rightarrow 3'$ WK automata is properly included in the language class accepted by $\textnormal{\textbf{F}}$ sensing $5'\rightarrow 3'$ WK automata.
	\end{enumerate}
\end{remark}

\subsection{Incomparability results}
\begin{theorem}\label{thm:18}
The class of languages that can be accepted by $\textnormal{\textbf{N}}$ sensing $5' \rightarrow 3'$ WK automata is incomparable with the classes of languages that can be accepted by $\textnormal{\textbf{FS}}$ and $\textnormal{\textbf{F1}}$ sensing $5' \rightarrow 3'$ WK automata under set theoretic inclusion.
\end{theorem}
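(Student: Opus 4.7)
The plan is to separate $\textbf{N}$ from $\textbf{FS}$ in one direction and $\textbf{F1}$ from $\textbf{N}$ in the other; because $\textbf{F1}\subseteq\textbf{FS}$ by definition, a single witness on each side will discharge all four required non-inclusions. For the direction $\textbf{F1}\setminus\textbf{N}$ I would invoke Proposition~\ref{c3} directly: the language $\{a^n b^m\mid n=m \text{ or } n=m+1\}$ is accepted by some $\textbf{F1}$ automaton but by no $\textbf{N}$ automaton, so it lies in both $\textbf{F1}\setminus\textbf{N}$ and $\textbf{FS}\setminus\textbf{N}$.

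For the direction $\textbf{N}\setminus\textbf{FS}$ my candidate is $L=\{a^n b^n\mid n\geq 0\}$. Membership in $\textbf{N}$ is immediate: take the one-state automaton whose only transition is $(a,b)$, which at each step consumes one $a$ on the left and one $b$ on the right; the accepted language is exactly $L$. To show $L\notin\textbf{FS}$, I plan to first establish an $\textbf{FS}$ analogue of Lemma~\ref{l1}: if an $\textbf{FS}$ automaton $M$ has an accepting computation whose first $i$ transitions let the left head read $w'_i$ and the right head read $w''_i$, then $w'_i w''_i\in L(M)$. The proof is a direct replay---apply the very same $i$ transitions to the shorter input $w'_i w''_i$; at every step the scheduled prefix and suffix are still the outermost letters of what remains, so we arrive at the same $i$th state as on the original input, and that state is final by the all-final assumption.

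With this lemma in hand, suppose for contradiction that some $\textbf{FS}$ automaton $M$ of radius $r$ accepts $L$, and fix $n>r$. By simplicity, the first transition of any accepting computation on $a^n b^n$ is one-sided and reads at most $r$ letters, so (using $n>r$) it consumes either a prefix $a^k$ from the left or a suffix $b^k$ from the right, with $1\leq k\leq r$. The lemma then forces $w'_1 w''_1\in L$, but $w'_1 w''_1$ equals $a^k$ or $b^k$, neither of which lies in $L$---a contradiction. I expect the main obstacle to be the $\textbf{FS}$ analogue of Lemma~\ref{l1}; once that observation is in place, the rest is a one-transition case analysis.
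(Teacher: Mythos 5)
Your proposal is correct, and on the $\textbf{N}\setminus\textbf{FS}$ side it takes a genuinely different route from the paper. The paper separates $\textbf{N}$ from $\textbf{FS}$ with the even-palindrome language $\{ww^R\mid w\in\{a,b\}^*\}$ (Figure \ref{tm:11A}) and rules out an $\textbf{FS}$ acceptor by a rather delicate multi-step analysis of the word $(bbbaaa)^m(aaabbb)^m$: it argues that the only even-palindromic proper prefix/suffix available is $bb$, follows the computation for two further one-sided transitions, and derives the contradiction there. You instead use $\{a^nb^n\mid n\geq 0\}$, accepted by the one-state automaton with the single transition $(a,b)$, and dispose of $\textbf{FS}$ acceptors in one step via an all-final ``replay'' lemma: in an all-final machine the concatenation of what the two heads have read during the first $i$ transitions of an accepting computation is itself accepted. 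That lemma is the natural generalization of Lemma \ref{l1} from $\textbf{F1}$ to arbitrary all-final machines; your replay argument is sound because in the new notion of configuration only the unread factor matters, and the same principle is in fact used implicitly in the paper's proofs of Theorems \ref{thm:5} and \ref{thm:6}. With it, the first non-$\lambda$ transition of a radius-$r$ $\textbf{FS}$ acceptor on $a^nb^n$ with $n>r$ reads $a^k$ or $b^k$, $1\leq k\leq r$, which would then have to lie in the language --- a contradiction. (Your $k\geq 1$ tacitly uses the paper's blanket no-$\lambda$-move assumption; if you prefer not to rely on it for the restricted variants, simply take the first transition that reads a nonempty factor, and nothing changes.) The other direction is identical to the paper's: Proposition \ref{c3} puts $\{a^nb^m\mid n=m \text{ or } n=m+1\}$ in $\textbf{F1}\setminus\textbf{N}$, and since every $\textbf{F1}$ automaton is an $\textbf{FS}$ automaton, your two witnesses indeed cover all four required non-inclusions. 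Your variant buys a shorter, more transparent separation together with a reusable lemma; the paper's choice of the palindrome language is more laborious but has the side benefit of placing that specific language in its comparison table (Table \ref{table:1}).
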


\begin{proof}
The language $L=\{ww^R\mid w\in\{a,b\}^*\}$ can be accepted by an $\textbf{N}$ sensing  $5' \rightarrow 3'$ WK automaton (Figure \ref{tm:11A}). Suppose that an $\textbf{FS}$ sensing  $5' \rightarrow 3'$ WK automaton accepts $L$. Let the radius of this automaton be $r$. Let $w_2=w_1w_1^R\in L$ with $w_1=(bbbaaa)^m$ and $m>r$. The word $w_2$ cannot be accepted by using only one of the transitions from the initial state $q_0$, i.e., $\delta(q_0,w_1w_1^R,\lambda)$ or $\delta(q_0,\lambda,w_1w_1^R)$ is not possible (because the length of $w_2$ ). Therefore there exists either $q\in\delta(q_0, w_3w_3^R,\lambda)$, $w_3\in V^*$ or $q\in\delta(q_0, \lambda,w_3w_3^R)$, $w_3\in V^*$ such that $w_2$ can be accepted by using transition(s) from $q$.
Since the word $w_3w_3^R$ should be in the language $L$ (i.e., it is an even palindrome) and the length of $bbb$ and $aaa$ patterns in $w_2$ is odd, the only even palindrome proper prefix (suffix) of $w_2$ is $bb$. Thus $w_3w_3^R=bb$ must hold.

Without loss of generality, assume that there exists $q\in\delta(q_0,bb, \lambda)$ in the automaton.
 By continuing the process,
 we must have at least one of $q'\in \delta(q,w_4,\lambda)$ or
 $q'\in \delta(q,\lambda,w_4)$ such that $bb w_4 \in L$ and $w_4$ is either the prefix or the suffix of the remaining unread part of word $w_2$, i.e., $ba^3(b^3a^3)^{m-1}(a^3b^3)^m$, with length less than $m$. Clearly, $w_4$ cannot be a prefix, and it can be only the suffix $bb$.
 Thus, in $q'$ the unprocessed part of the input is $ba^3(b^3a^3)^{m-1}(a^3b^3)^{m-1}a^3b$.
 Now the automaton must read a prefix or a suffix of this word, let us say $w_5$ such that $bbw_5bb \in L$, that is $w_5$ itself is an even palindrome, and its length is at most $r< m$.
 But such a word does not exist,
the length of $bbb$ and $aaa$ patterns in the unread part is odd and their length is more than $r$.
We have arrived to a contradiction, thus $L$ cannot be accepted by any
$\textbf{FS}$ sensing  $5' \rightarrow 3'$ WK automaton.

To prove the other direction, let us consider the language $L=\{a^n b^m\mid n=m \text{ or } n=m+1\}$. This language can be accepted by an $\textbf{F1}$ sensing $5' \rightarrow 3'$ WK automaton as it is shown in Figure \ref{c:3}. Moreover, by Proposition \ref{c3} this language cannot be accepted by any $\textbf{N}$ sensing $5' \rightarrow 3'$ WK automata.
\end{proof}

\begin{figure}[h]
    \centering
        \includegraphics[scale=0.15]{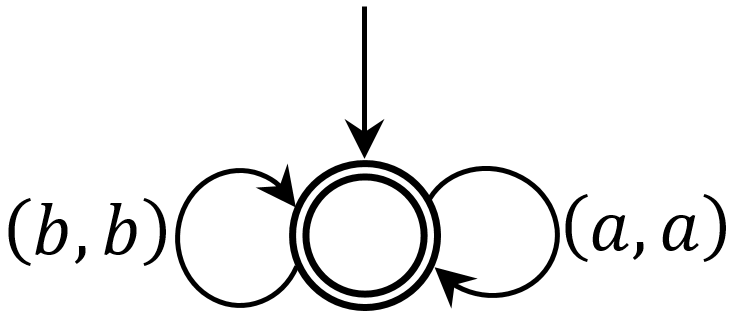}
            \caption{A sensing $5' \rightarrow 3'$ WK automaton of type $\textbf{N}$ accepting the language of even palindromes $\{ww^R\mid w\in\{a,b\}^* \}$.
}
    \label{tm:11A}
\end{figure}

\begin{theorem}\label{thm:19}
The language class accepted by $\textnormal{\textbf{NS}}$ sensing $5' \rightarrow 3'$ WK automata is incomparable with the language class accepted by $\textnormal{\textbf{F1}}$ sensing $5' \rightarrow 3'$ WK automata.
\end{theorem}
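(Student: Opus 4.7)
The plan is to establish incomparability by producing two witness languages: one lying in the $\textbf{NS}$ class but outside the $\textbf{F1}$ class, and a second lying in the $\textbf{F1}$ class but outside the $\textbf{NS}$ class. Both directions should be short, because earlier material in the paper has already done most of the heavy lifting; the task reduces to selecting the right witnesses and explicitly invoking Lemma \ref{l1} and Proposition \ref{c3}.

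For the direction $\textbf{NS} \not\subseteq \textbf{F1}$, I would recycle the witness from Theorem \ref{thm:2}, namely $L_1=\{a^{3n}b^{2m}\mid n,m\geq 0\}$. The $\textbf{NS}$ automaton exhibited in Figure \ref{tm:2} accepts $L_1$, so it suffices to argue $L_1\notin \textbf{F1}$. Observe that $bb\in L_1$ has length $2$, while no word of length $1$ belongs to $L_1$ (since neither $3n=1$ nor $2m=1$ has a nonnegative integer solution, and $ab, ba\notin L_1$). By Lemma \ref{l1}, any language accepted by an $\textbf{F1}$ sensing $5'\rightarrow 3'$ WK automaton must contain a word of every length $l$ from $0$ up to the length of one of its members; the gap at $l=1$ therefore rules $L_1$ out of the $\textbf{F1}$ class.

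For the direction $\textbf{F1}\not\subseteq \textbf{NS}$, I would simply appeal to Proposition \ref{c3}, which already provides the language $L_2=\{a^nb^m\mid n=m \text{ or } n=m+1\}$ together with an $\textbf{F1}$ automaton (Figure \ref{c:3}) and the proof that $L_2$ is not accepted by any $\textbf{NS}$ sensing $5'\rightarrow 3'$ WK automaton. Combining the two witnesses yields incomparability.

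The main obstacle, such as it is, is purely one of witness selection: one must verify that the short-word violation needed for Lemma \ref{l1} really holds in $L_1$ (there are no sneaky length-$1$ members) and must recall that Proposition \ref{c3} explicitly covers the $\textbf{NS}$ case (not only $\textbf{N}$ or $\textbf{N1}$). No new automaton construction or radius-based pumping argument is required, since both nontrivial technical tools—Lemma \ref{l1} for word-length gaps and Proposition \ref{c3} for the absence of an $\textbf{NS}$ automaton—are already in hand.
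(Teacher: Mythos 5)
Your proposal is correct and follows essentially the same route as the paper: the same witness $\{a^{3n}b^{2m}\mid n,m\geq 0\}$ with Lemma \ref{l1} (via the missing length-$1$ word) for the direction $\textbf{NS}\not\subseteq\textbf{F1}$, and the same appeal to Proposition \ref{c3} with $\{a^nb^m\mid n=m \text{ or } n=m+1\}$ for $\textbf{F1}\not\subseteq\textbf{NS}$. No substantive difference from the paper's argument.
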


\begin{table}[t]
\centering
\captionof{table}{Some specific languages belonging to language classes accepted by various classes of WK automata. Reference to figures indicate a specific automaton that accept the given language. \xmark \ indicates that the language cannot be accepted by the automata type of the specific column. Trivial inclusions are also shown, e.g., in the first line $\textbf{N1}$ in, e.g., column $\textbf{F}$ means that every $\textbf{N1}$ automaton is, in fact, also an $\textbf{F}$ automaton.} \label{table:1}
\renewcommand{\arraystretch}{1.5}
\begin{tabular}{P{5.61cm} | c  c  c c c  c  c c  c  c  c  c  }
Language              & $\textbf{N1}$ & $\textbf{NS}$ & $\textbf{N}$ & $\textbf{F1}$ & $\textbf{FS}$  &$\textbf{F}$ & WK \\
\hline
$\{a^nb^m\mid n,m\geq 0 \}$          &Fig. \ref{tm:2a}& $\textbf{N1}$ &$\textbf{N1}$ & $\textbf{N1}$ & $\textbf{N1}$ & $\textbf{N1}$  &$\textbf{N1}$   \\
$\{a^{3n}b^{2m}\mid n,m\geq 0 \}$          &\xmark& Fig. \ref{tm:2} & $\textbf{NS}$ & \xmark & $\textbf{NS}$ & $\textbf{NS}$  &$\textbf{NS}$   \\
$\{ww^R\mid w\in\{a,b\}^* \}$          &\xmark& \xmark &Fig. \ref{tm:11A} & \xmark &\xmark & $\textbf{N}$  & $\textbf{N}$   \\
$\{a^nb^m\mid n=m$ or $n=m+1 \}$     &\xmark& \xmark &\xmark & Fig. \ref{c:3} & $\textbf{F1}$ & $\textbf{F1}$  &$\textbf{F1}$   \\
$\{(aa)^n(bb)^m\mid m\leq n \leq m+1,m\geq 0 \}$          &\xmark& \xmark &\xmark & \xmark& Fig. \ref {tm:4} & $\textbf{FS}$  & $\textbf{FS}$   \\
$\{a^{2n+q}c^{4m}b^{2q+n}\mid n,q\geq 0, m\in \{0,1\} \}$ 	 &\xmark& \xmark &\xmark & \xmark & \xmark & Fig. \ref{tm:5}  &$\textbf{F}$   \\
$\{a^n cb^n c\mid n\geq 1\}$           &\xmark& \xmark &\xmark & \xmark & \xmark & \xmark  &Fig. \ref{tm:6}   \\
\end{tabular} \bigskip

\end{table}
%\label{s:conclu}

\begin{proof}
Consider the language $L=\{a^{3n}b^{2m}\mid n,m\geq0\}$. An $\textbf{NS}$ sensing $5' \rightarrow 3'$ WK automaton can move one of its heads at a time. Therefore it can read three $a$'s by the left head or two $b$'s by the right head (see Figure \ref{tm:2}). Although, according to Lemma \ref{l1}, $w_s$  is $bb$ and it cannot be the shortest nonempty accepted word for an $\textbf{F1}$ sensing $5' \rightarrow 3'$ WK automaton. Therefore, this language cannot be accepted by an $\textbf{F1}$ sensing $5' \rightarrow 3'$ WK automaton.

Now let us consider the language $L=\{a^n b^m\mid n=m \text{ or } n=m+1\}$. This language can be accepted by an $\textbf{F1}$ sensing $5' \rightarrow 3'$ WK automaton as it is shown in Figure \ref{c:3}. By Proposition \ref{c3}, it is already shown that $L$ cannot be accepted by any $\textbf{N}$ sensing $5' \rightarrow 3'$ WK automata and obviously it cannot be accepted by any $\textbf{NS}$ sensing $5' \rightarrow 3'$ WK automata, neither.
\end{proof}

%-------------------------------------------------------------------------------------

\section{Conclusion}

\begin{figure}[t]
    \centering
        \includegraphics[scale=0.41]{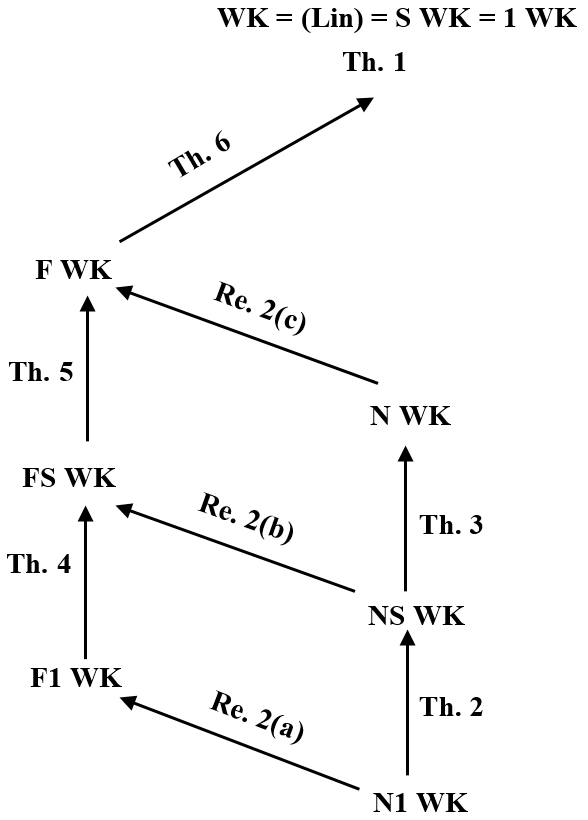}
            \caption{Hierarchy of sensing $5' \rightarrow 3'$ WK finite automata languages in a Hasse diagram (the language classes accepted by various types of sensing $5' \rightarrow 3'$ WK finite automata, the types of the automata are displayed in the figure with the abbreviations: $\textbf{N}$: stateless, $\textbf{F}$: all-final, $\textbf{S}$: simple, $\textbf{1}$: 1-limited; Lin stands for the class of linear context-free languages).
            Labels on the arrows indicate where the proof of the proper containment was presented (Th stands for Theorems, Re stands for Remark). The language classes for which the containment is not shown are incomparable.}

    \label{hasse}
\end{figure}

Comparing the new model to the old models we should mention that the general model (the automata without using any restrictions) has the same accepting power, i.e., the linear context-free languages, as the old sensing $5' \rightarrow 3'$ WK automata model with sensing parameter.  However, by our proofs, the new model gives a more finer hierarchy, as it is displayed in Figure \ref{hasse}. Table \ref{table:1} gives some specific languages that separate some of the language classes.
Further comparisons of related language classes and properties of the language classes defined by the new model are left to the future: in a forthcoming paper the deterministic variants are addressed.
It is also an interesting idea to see the connections of other formal models and our WK automata, e.g.,
similarities with some variants of Marcus contextual grammars \cite{MarcusCG} can be established.

\section*{Acknowledgements} The authors are very grateful to the anonymous reviewers for their comments and remarks.

\nocite{*}
\bibliographystyle{eptcs}
\bibliography{generic}

\end{document}